\newcolumntype{d}[1]{D{.}{.}{#1}}
\newtheorem{theorem}{Theorem}
\newtheorem{prop}{Proposition}
\newtheorem{lemma}{Lemma}
\newcommand{\frn}{\mathfrak{n}}
\newcommand{\frx}{\mathfrak{z}}
\newcommand{\bx}{\mathbf{x}}
\newcommand{\calF}{\mathcal F}
\newcommand{\calN}{\mathcal N}
\newcommand{\rbEstim}[2]{\hat\xi_{#1}^{#2}}
\newcommand{\esp}{\mathbb{E}}
\newcommand{\prob}{\mathbb{P}}
\newcommand{\un}{\mathbb{I}}
\newcommand{\dlim}{\stackrel{\mathcal L}{\longrightarrow}}
\newcommand{\plim}{\stackrel{\mathbb P}{\longrightarrow}}
\begin{document}
\begin{frontmatter}

\title{A vanilla Rao--Blackwellization of Metropolis--Hastings
algorithms\thanksref{T1}}
\runtitle{Vanilla Rao--Blackwellization}

\thankstext{T1}{Supported by the EPSRC and by the Agence Nationale de
la Recherche through the 2009--2012 Project \textsc{Big MC}.}

\begin{aug}
\author[A]{\fnms{Randal} \snm{Douc}\ead[label=e1]{randal.douc@it-sudparis.eu}} and
\author[B]{\fnms{Christian P.} \snm{Robert}\corref{}\ead[label=e2]{xian@ceremade.dauphine.fr}}
\runauthor{R. Douc and C. P. Robert}
\affiliation{Telecom SudParis and Universit\'e Paris-Dauphine}
\address[A]{TELECOM SudParis\\
9 Rue Charles Fourier\\
91011 Evry cedex\\
France\\
\printead{e1}}
\address[B]{CEREMADE\\
Universit\'e Paris-Dauphine\\
75775 Paris cedex 16\\
CREST, INSEE\\
92245 Malakoff cedex\\
France\\
\printead{e2}}
\end{aug}

% HISTORY:
\received{\smonth{10} \syear{2009}}
\revised{\smonth{5} \syear{2010}}

% ABSTRACT
%
\begin{abstract}
Casella and Robert [\textit{Biometrika} \textbf{83} (1996) 81--94]
presented a general Rao--Black\-wel\-lization principle for
accept-reject and Metropolis--Hastings schemes that leads to
significant decreases in
the variance of the resulting estimators,
but at a high cost in computation and storage. Adopting a completely
different perspective, we introduce instead a
universal scheme that guarantees variance reductions in all
Metropolis--Hastings-based
estimators while keeping the
computation cost under control. We establish a central limit theorem
for the improved estimators and illustrate
their performances on toy examples and on a probit model estimation.
\end{abstract}

% KEYWORDS
%
\begin{keyword}[class=AMS]
\kwd{62-04}
\kwd{60F05}
\kwd{60J22}
\kwd{60J05}
\kwd{62B10}.
\end{keyword}
\begin{keyword}
\kwd{Metropolis--Hastings algorithm}
\kwd{Markov chain Monte Carlo (MCMC)}
\kwd{probit model}
\kwd{central limit theorem}
\kwd{variance reduction}
\kwd{conditioning}.
\end{keyword}

\end{frontmatter}

%s1 ###
\section{Introduction}

As its accept-reject predecessor, the Metropolis--Hastings simulation
algorithm relies
in part on the generation of
uniform variables to achieve given acceptance probabilities. More
precisely, given a target density $f$ with respect
to a dominating measure on the space $\mathcal{X}$, if the
Metropolis--Hastings proposal
is associated with the density
$q(x|y)$ (with respect to the same dominating measure), then the
acceptance probability of the corresponding Metropolis--Hastings iteration
at time $t$ is
\[
\alpha\bigl(x^{(t)},y_t\bigr) = \min\biggl\{ 1, \frac{\pi(y_t)}{\pi
(x^{(t)})}
\frac{q(x^{(t)}|y_t) }{ q(y_t|x^{(t)}) } \biggr\},
\]
where $y_t\sim q(y_t|x^{(t)})$ is the proposed value for $x^{(t+1)}$.
In practice, this means that a uniform $u_t\sim\mathcal{U}(0,1)$ is
first generated and
that $x^{(t+1)}=y_t$ if and only if $u_t\le\alpha(x^{(t)},y_t)$.

Since the uniformity of the $u_t$'s is an extraneous (albeit necessary)
noise, in that it does not directly provide information
about the target $f$ (but only through its acceptance rate),
\citet{casellarobert1996} took advantage of
this flow of auxiliary variables $u_t$ to reduce the variance of the
resulting estimators while preserving
their unbiasedness by integrating out the $u_t$'s conditional on all
simulated $y_t$'s.
This strategy has a nonnegligible cost of $O(N^2)$ for a given sample
of size $N$. While\vadjust{\goodbreak} \mbox{extensions}
have been proposed in the literature [\citet{casellarobert1998},
\citet{perron1999}; see also
\citet{delmasjourdain2009} for an analysis of a
Rao--Blackwellized version of the estimator when
conditioning on the rejected candidates], this solution is therefore
not considered in practice,
in part due to this very cost. The current paper reproduces the
Rao--Blackwellization
argument of \citet{casellarobert1996} by means of an independent
representation that allows the variance to be reduced at a fixed
computational cost. Section \ref{sec:R&B} outlines the
Rao--Blackwellization technique and Section \ref{sec:con} validates
the resulting
variance reduction, including a derivation of the asymptotic variance
of the improved estimators, while Section \ref{sec:ex} presents some
illustrations of the improvement on toy examples.\looseness=-1

%s2 ###
\section{The Rao--Blackwellization solution}\label{sec:R&B}
When considering the outcome of a Metropolis--Hastings experiment,
$(x^{(t)})_t$, and
the way it is used in Monte Carlo approximations,
%
%e1 ###
%
\begin{equation}\label{eq:estimatorMCMC}
\delta= \frac{1}{N} \sum_{t=1}^N h\bigl(x^{(t)}\bigr) ,
\end{equation}
alternative representations of this estimator are
\[
\delta= \frac{1}{N} \sum_{t=1}^N \sum_{j=1}^t h(y_j) \mathbb
{I}_{x^{(t)}=y_j} \quad\mbox{and}\quad
\delta= \frac{1}{N} \sum_{i=1}^M \frn_i h(\frx_i) ,
\]
where
the $y_j$'s are the proposed Metropolis--Hastings moves,
the $\frx_i$'s are the accepted $y_j$'s,
$M$ is the number of accepted $y_j$'s up to time $N$
and
$\frn_i$ is the number of times $\frx_i$ appears in the sequence
$(x^{(t)})_t$.
The first representation is the one used by
\citet{casellarobert1996}, who integrate out the random elements of the
outer sum,
given the sequence of $y_t$'s. The second representation is also found
in \citet{sahuzhigljavsky1998},
\citet{gasemyr2002}, \citet{sahuzhigljavsky2003} and
\citet{malefakiiliopoulos2008}, and is the basis for our construction.

Let us first recall the basic properties of the pairs $(\frx_i,\frn
_i)$, also found in the above references.
\begin{lemma} \label{lem:induite}
The sequence $(\frx_i,\frn_i)$ is such that: %\note{Rd}{Pourquoi
%borner
%a $M$}
%
\begin{enumerate}
\item$(\frx_i,\frn_i)_i$ is a Markov chain; %\note{Rd}{Decider si on
%indice en general par $_i$ ou plus precisement par $_{i\geq0}$}
%
\item$\frx_{i+1}$ and $\frn_i$ are independent given $\frx_i$;
\item$\frn_i$ is distributed as a geometric random variable with
probability parameter
%
%e2 ###
%
\begin{equation}\label{eq:defp}
p(\frx_i) := \int\alpha(\frx_i,y) q(y|\frx_i) \,{d}y ;
\end{equation}
\item$(\frx_i)_i$ is a Markov chain with transition kernel $\tilde
Q(\frx,{d}y)=\tilde q(y|\frx)\,{d}y$ and stationary
distribution $\tilde\pi$ such that %\note{Rd}{J ai rajoute $\tilde Q$
%parce que j en avais besoin}
%
\[
\tilde q(\cdot|\frx) \propto\alpha(\frx,\cdot) q(\cdot|\frx)
\quad\mbox{and}\quad \tilde\pi(\cdot) \propto\pi(\cdot)p(\cdot) .
\]
\end{enumerate}
\end{lemma}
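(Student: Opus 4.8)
The plan is to recognise the pair sequence $(\frx_i,\frn_i)$ as the embedded ``jump chain with holding times'' of the \mh~chain, and to read off all four properties from the one-step transition mechanism together with the (strong) Markov property of $(x^{(t)})$. The starting observation is that, conditionally on the chain currently sitting at a state $x$, one \mh~step either produces an accepted move --- an \emph{escape} to a genuinely new value --- or leaves the chain at $x$. The escape probability is exactly $p(x)=\int\alpha(x,y)q(y|x)\dd y$, and, conditionally on an escape occurring, the destination has density
$$
\tilde q(y|x)=\frac{\alpha(x,y)q(y|x)}{p(x)}\,,
$$
since the accepted proposal carries the unnormalised density $\alpha(x,\cdot)q(\cdot|x)$. (In the dominated setting of the paper the event that the proposal coincides with $x$ has probability zero, so ``accepted move'' and ``escape to a new value'' may be identified.) Because successive \mh~steps are conditionally independent given the current state, while the chain remains at $\frx_i$ the escape/stay decisions form i.i.d.\ Bernoulli$(p(\frx_i))$ trials.

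From this renewal picture, claims (3) and (2) fall out simultaneously. Writing $N$ for the index of the first successful (escape) trial out of $\frx_i=x$ and $Y_N$ for the corresponding destination, the i.i.d.\ structure of the trials gives, for any measurable $A$,
$$
\prob\!\left(N=n,\,Y_N\in A \mid \frx_i=x\right)=(1-p(x))^{n-1}\int_A\alpha(x,y)q(y|x)\dd y=\bigl[(1-p(x))^{n-1}p(x)\bigr]\,\tilde Q(x,A)\,.
$$
Here $N=\frn_i$ and $Y_N=\frx_{i+1}$, so the joint law factorises into a geometric mass function in $n$ and the destination law $\tilde Q(x,\cdot)$; this yields at once that $\frn_i$ is geometric with parameter $p(\frx_i)$ (claim 3) and that $\frx_{i+1}$ is independent of $\frn_i$ given $\frx_i$ (claim 2), with $\frx_{i+1}\sim\tilde q(\cdot|\frx_i)$.

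For claims (1) and (4) I would invoke the strong Markov property of $(x^{(t)})$ at the successive escape times. Once the chain reaches $\frx_{i+1}$, its subsequent evolution depends on the past only through $\frx_{i+1}$; hence the conditional law of $(\frx_{i+1},\frn_{i+1})$ given the whole history $(\frx_1,\frn_1),\dots,(\frx_i,\frn_i)$ depends only on $\frx_i$ (indeed only through the destination $\frx_{i+1}$ and its escape clock), which is the Markov property of the pair chain, and, after dropping $\frn$, the Markov property of $(\frx_i)$ with kernel $\tilde Q$. The stationary distribution follows from reversibility: \mh~detailed balance, $\pi(x)\alpha(x,y)q(y|x)=\pi(y)\alpha(y,x)q(x|y)$, combined with $\tilde q(y|x)=\alpha(x,y)q(y|x)/p(x)$, gives for $\tilde\pi\propto\pi\,p$
$$
\tilde\pi(x)\tilde q(y|x)\propto\pi(x)\alpha(x,y)q(y|x)=\pi(y)\alpha(y,x)q(x|y)\propto\tilde\pi(y)\tilde q(x|y)\,,
$$
so that $\tilde Q$ is $\tilde\pi$-reversible and $\tilde\pi\propto\pi\,p$ is its invariant law (claim 4).

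The main obstacle is not any single calculation but the careful bookkeeping of the random escape times: the quantities $\frx_i,\frn_i$ are defined through a random, state-dependent time change of $(x^{(t)})$, so the informal ``i.i.d.\ trials'' and ``fresh start'' statements must be underpinned by the strong Markov property applied at the stopping times $T_i=\inf\{t>T_{i-1}:\text{step }t\text{ is accepted}\}$. The delicate point is to verify that conditioning on $(\frx_1,\frn_1),\dots,(\frx_i,\frn_i)$ is equivalent, for the purpose of the next transition, to conditioning on $\{x^{(T_i)}=\frx_i\}$ with the trial stream restarting afresh --- that is, that the escape clock at $\frx_i$ carries no residual information about the future beyond $\frx_{i+1}$. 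Once this regenerative structure is established rigorously, the four assertions are immediate consequences of the factorisation above.
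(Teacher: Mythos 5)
Your proof is correct, and it is in fact more complete than the paper's own: the paper proves only point 4, explicitly deferring points 1--3 to the references cited just before the lemma, whereas you derive points 2 and 3 from the joint factorisation
\begin{equation*}
\prob\left(\frn_i=n,\,\frx_{i+1}\in A\mid\frx_i=x\right)=\left[(1-p(x))^{n-1}p(x)\right]\tilde Q(x,A)\,,
\end{equation*}
and point 1 from the strong Markov property at the escape times. Where the two proofs overlap (point 4) they coincide exactly: both identify $\tilde q(\cdot|\frx)=\alpha(\frx,\cdot)\,q(\cdot|\frx)/p(\frx)$ by integrating out the waiting time, and then transfer the \mh~detailed-balance identity $\pi(x)\alpha(x,y)q(y|x)=\pi(y)\alpha(y,x)q(x|y)$ to $\tilde\pi\propto\pi p$, the factor $p(x)$ cancelling. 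Your iid-trials picture is, moreover, precisely the representation the paper itself relies on later, when it writes $\frn_i=1+\sum_{j\ge 1}\prod_{\ell\le j}\un\{u_\ell\ge\alpha(\frx_i,y_\ell)\}$ with iid pairs $(y_\ell,u_\ell)$ in the proof of Lemma \ref{lem:2uston}; constructing the whole chain from these iid streams of proposals and uniforms is also the cleanest way to discharge the regeneration bookkeeping you rightly flag as the delicate step, since each visit to a state then consumes a fresh block of the underlying iid sequence, and the strong Markov property at the stopping times $T_i$ reduces to independence of disjoint blocks. Your caveat that an accepted proposal coincides with the current state only with probability zero is also well placed: it is exactly what makes $\frn_i$, defined in the paper as the number of appearances of $\frx_i$ in $(x^{(t)})_t$, agree with your trial count $N$.
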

\begin{pf}
We only prove the last point of the lemma. The transition kernel
density $\tilde q$ of the Markov chain\vadjust{\goodbreak} $(\frx_i)_i$ is
obtained by integrating out the geometric waiting time, namely
$
\tilde q(\cdot|\frx_i) = \alpha(\frx_i,\cdot) q(\cdot|\frx_i)
/
p(\frx_i)$.
Thus,
\[
\tilde\pi(x) \tilde q(y|x) = \frac{\pi(x)p(x)}{\int\pi
(u)p(u)\,{d}u}
\frac{\alpha(x,y)q(y|x)}{p(x)}
= \tilde\pi(y) \tilde q(x|y) ,
\]
where we have used the detailed balance property of the original
Metropolis--Hastings algorithm,
namely that $\pi(x) q(y|x)\alpha(x,y)=\pi(y)q(x|y)\alpha(y,x)$. This
shows that the chain
$(\frx_i)_i$ satisfies a detailed balance property with respect to
$\tilde\pi$,
thus that it is $\tilde\pi$-reversible, which completes the proof.
\end{pf}

Since the Metropolis--Hastings estimator $\delta$ only involves the
$\frx_i$'s, that
is, the accepted $y_t$'s,
an optimal weight for those random variables is the importance weight
$1/p(\frx_i)$,
leading to the corresponding importance sampling estimator,
\[
\delta^* = \frac{1}{N} \sum_{i=1}^M \frac{h(\frx_i)}{p(\frx
_i)} ,
\]
but this quantity is usually unavailable in closed form and needs to be
estimated by an unbiased estimator. The geometric $\frn_i$ is the
obvious solution that is used
in the original Metropolis--Hastings estimate, but solutions with
smaller variance also
are available, as shown by
the following results.
\begin{lemma}\label{lem:2uston}
If $(y_j)_j$ is an i.i.d. sequence with distribution $q(y|\frx_i)$,
then the quantity
\[
\hat\xi_i = 1+\sum_{j=1}^\infty\prod_{\ell\le j} \{ 1 -
\alpha
(\frx_i,y_\ell) \}
\]
is an unbiased estimator of $1/p(\frx_i)$, the variance of which,
conditional on $\frx_i$,
is lower than the conditional variance of $\frn_i$, $\{1-p(\frx_i)\}
/p^2(\frx_i)$.
\end{lemma}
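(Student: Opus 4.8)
The plan is to recognise $\hat\xi_i$ as a Rao--Blackwellised version of the geometric variable $\frn_i$, so that both the unbiasedness and the variance inequality fall out of the single identity $\hat\xi_i=\esp[\frn_i\mid \frx_i,(y_\ell)_\ell]$ combined with the law of total variance. Throughout I fix $\frx_i=x$ and abbreviate $\alpha_\ell=\alpha(x,y_\ell)$, where $(y_\ell)_\ell$ is the stated iid sequence from $q(\cdot\mid x)$; by the definition \eqref{eq:defp} one has $\esp[\alpha_\ell\mid x]=p(x)$.

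First I would place $\frn_i$ on the same probability space as the $y_\ell$'s, so that the two variances being compared can be read off the same experiment. Drawing iid uniforms $u_\ell\sim\mathcal{U}(0,1)$ independent of $(y_\ell)_\ell$ and setting $B_\ell=\un_{u_\ell\le\alpha_\ell}$, the indicators $B_\ell$ are, conditionally on $x$ and $(y_\ell)_\ell$, independent Bernoulli variables with mean $\alpha_\ell$, so that $\frn_i=\min\{j\ge 1:B_j=1\}$ is geometric with parameter $p(x)$, matching point~3 of Lemma~\ref{lem:induite}. The tail-sum identity $\frn_i=\sum_{j\ge0}\un_{\frn_i>j}$ together with $\un_{\frn_i>j}=\prod_{\ell\le j}(1-B_\ell)$ then yields the representation
$$
\frn_i = 1+\sum_{j=1}^\infty \prod_{\ell\le j}(1-B_\ell)\,.
$$

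The central step is to condition on $(x,(y_\ell)_\ell)$. Since every term is non-negative, Tonelli justifies exchanging the sum with the expectation, and the conditional independence of the $B_\ell$'s collapses each product to $\prod_{\ell\le j}(1-\alpha_\ell)$, giving exactly $\esp[\frn_i\mid x,(y_\ell)_\ell]=\hat\xi_i$. Unbiasedness is then immediate from the tower property, $\esp[\hat\xi_i\mid x]=\esp[\frn_i\mid x]=1/p(x)$; alternatively it can be checked directly, each $\esp[\prod_{\ell\le j}(1-\alpha_\ell)\mid x]$ factorising into $(1-p(x))^j$, whose geometric series sums to $1/p(x)$.

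For the variance I would invoke the law of total variance conditionally on $x$,
$$
\varCond{\frn_i}{x} = \esp\!\big[\varCond{\frn_i}{x,(y_\ell)_\ell}\,\big|\,x\big] + \varCond{\hat\xi_i}{x} \ \ge\ \varCond{\hat\xi_i}{x}\,,
$$
the dropped first term being non-negative and the left-hand side equalling $\{1-p(x)\}/p^2(x)$ by the geometric variance formula. The only points demanding genuine care are the interchange of the infinite sum with the conditional expectation, handled by non-negativity through Tonelli, and the square-integrability of $\frn_i$ needed to legitimise the variance decomposition, which holds because the geometric law has a finite second moment; everything else is bookkeeping.
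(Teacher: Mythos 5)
Your proof is correct and takes essentially the same route as the paper: you represent $\frn_i$ as $1+\sum_{j\ge 1}\prod_{\ell\le j}\un\{u_\ell\ge\alpha(\frx_i,y_\ell)\}$ on a common probability space, identify $\hat\xi_i$ as its conditional expectation given $\frx_i$ and $(y_\ell)_\ell$, and conclude unbiasedness by the tower property and the variance inequality by the law of total variance. Your explicit appeals to Tonelli and to the square-integrability of the geometric law merely make rigorous what the paper leaves implicit.
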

\begin{pf} %Be careful not too say too much about the general principle!
Since $\frn_i$ can be written as
\[
\frn_i =1+ \sum_{j=1}^\infty\prod_{\ell\le j} \mathbb{I}\{
u_\ell
\ge\alpha(\frx_i,y_\ell) \} ,
\]
where the $u_j$'s are i.i.d. $\mathcal{U}(0,1)$, given that the sum
actually stops with the first pair $(u_j,y_j)$
such that $u_j \le\alpha(\frx_i,y_j)$, a Rao--Blackwellized version of
$\frn_i$ consists in its expectation conditional
on the sequence $(y_j)_j$:
\begin{eqnarray*}
\hat\xi_i &=& 1+\sum_{j=1}^\infty
\mathbb{E}\biggl[\prod_{\ell\le j}\mathbb{I}\{
u_\ell\ge\alpha(\frx_i,y_\ell)\}\Big| (y_t)_{t\ge
1}\biggr]\\
&=& 1+ \sum_{j=1}^\infty\prod_{\ell\le j} \mathbb{P}\bigl(
u_\ell\ge\alpha(\frx_i,y_\ell)| (y_t)_{t\ge1}\bigr)\\
&=& 1+ \sum_{j=1}^\infty\prod_{\ell\le j} \{ 1 - \alpha(\frx
_i,y_\ell) \} .
\end{eqnarray*}
Therefore, since $\hat\xi_i$ is a conditional expectation of $\frn_i$,
its variance is necessarily smaller.
\end{pf}

We note that this unbiased estimate of $1/p(\frx_i)$ can be related to
the Bernoulli factory approach of
\citet{latuszynskikosmidispaparoberts2010}, in that we are
only using Bernoulli events in this derivation.

Given that $\alpha(\frx_i,y_j)$ involves a ratio of probability
densities, $\alpha(\frx_i,y_j)$ takes the
value $1$ with positive probability and the sum $\hat\xi_i$ is
therefore almost
surely finite. This may, however, require far too many iterations to be
realistically computed or
it may involve too much variability in the number of iterations thus required.
%$\alpha(\frx_i,y_j)$ is always smaller than $1$, IMPOSSIBILE!!!
An intermediate estimator with a fixed computational cost is
fortunately available.
\begin{prop} \label{prop:var}
If $(y_j)_j$ is an i.i.d. sequence with distribution $q(y|\frx_i)$ and
$(u_j)_j$ is an i.i.d. uniform sequence,
for any $k\ge0$, the quantity
%
%e3 ###
%
\begin{equation}\label{eq:defRBestim}
\hat\xi_i^k = 1+\sum_{j=1}^\infty
\prod_{1\le\ell\le k\wedge j} \{ 1 - \alpha(\frx_i,y_j)
\}
\prod_{k+1\le\ell\le j} \mathbb{I}\{u_\ell\ge\alpha(\frx
_i,y_\ell
)\}
\end{equation}
is an unbiased estimator of $1/p(\frx_i)$ with an almost sure finite
number of terms.
% and with variance lower than the variance of $\frn_i$.
Moreover, for $k \geq1$,
\[
%& \varCond{\hat\xi_i^0}{\frx_i}=\frac{1-p(\frx_i)}{p^2(\frx_i)} , \\
\mathbb{V} [ \hat\xi_i^k|\frx_i]=\frac
{1-p(\frx_i)}{p^2(\frx_i)} -
\frac
{1-(1-2p(\frx_i)+r(\frx_i))^k}{2p(\frx_i)-r(\frx_i)}\biggl( \frac
{2-p(\frx_i)}{p^2(\frx_i)} \biggr)\bigl(p(\frx_i)-r(\frx_i)\bigr) ,
%&\varCond{\hat\xi_i}{\frx_i}=\frac{1-p(\frx_i)}{p^2(\frx_i)} -
%)(p(\frx_i)-r(\frx_i))
\]
where $p$ is defined in (\ref{eq:defp}) and $r(\frx_i) := \int\alpha
^2(\frx_i,y) q(y|\frx_i) \,{d}y$.
Therefore, we have
\[
\mathbb{V} [ \hat\xi_i|\frx_i] \leq
\mathbb{V} [ \hat\xi_i^k|\frx_i]
\leq
\mathbb{V} [ \hat\xi_i^0|\frx_i] =
\mathbb{V} [ \frn_i|\frx_i] .
\]
\end{prop}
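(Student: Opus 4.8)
The plan is to recognise $\hat\xi_i^k$ as a conditional expectation of $\frn_i$, which delivers unbiasedness, finiteness and the ordering of the variances at once, and then to carry out an explicit second-moment computation for the closed form. Throughout I condition on $\frx_i$, suppress the $\frx_i$-dependence by writing $p=p(\frx_i)$, $r=r(\frx_i)$, and abbreviate $\alpha_\ell=\alpha(\frx_i,y_\ell)$, $a_\ell=1-\alpha_\ell$ and $b_\ell=\un\{u_\ell\ge\alpha_\ell\}$; the relevant identities are then $\esp[a_\ell\mid\frx_i]=1-p$, $\esp[a_\ell^2\mid\frx_i]=1-2p+r$, $\esp[b_\ell\mid y_\ell]=a_\ell$ and $b_\ell^2=b_\ell$.

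First I would set $\mathcal{G}_{k}=\sigma\big((y_\ell)_{\ell\ge1},(u_\ell)_{\ell>k}\big)$, so that $(\mathcal{G}_{k})_k$ is decreasing in $k$. Starting from the representation $\frn_i=1+\sum_{j\ge1}\prod_{\ell\le j}b_\ell$ used in the proof of Lemma~\ref{lem:2uston}, conditioning the $j$-th term on $\mathcal{G}_{k}$ integrates out exactly $b_1,\dots,b_{k\wedge j}$ into $a_1,\dots,a_{k\wedge j}$ while leaving $b_{k+1},\dots,b_j$ untouched, by independence across $\ell$; this gives $\espCond{\frn_i}{\mathcal{G}_{k}}=\hat\xi_i^k$. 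Unbiasedness, $\espCond{\hat\xi_i^k}{\frx_i}=\espCond{\frn_i}{\frx_i}=1/p$, is then immediate from the tower property, and almost sure finiteness follows because the tail indicators $b_\ell$, $\ell>k$, are i.i.d.\ and each vanishes with probability $p>0$, so the sum truncates after finitely many terms. Most importantly, since $\mathcal{G}_{k}\supseteq\mathcal{G}_{k+1}\supseteq\sigma((y_\ell)_\ell)$ and $\hat\xi_i=\espCond{\frn_i}{\sigma((y_\ell)_\ell)}$, the law of total variance yields the full chain $\varCond{\hat\xi_i}{\frx_i}\le\varCond{\hat\xi_i^{k+1}}{\frx_i}\le\varCond{\hat\xi_i^k}{\frx_i}\le\varCond{\hat\xi_i^0}{\frx_i}=\varCond{\frn_i}{\frx_i}$, the last equality holding because $k=0$ makes the first product empty.

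The substantive part, and the step I expect to be the main obstacle, is the explicit variance. Writing $\hat\xi_i^k=1+S$ with $S=\sum_{j\ge1}T_j$ and $T_j=\prod_{\ell\le k\wedge j}a_\ell\prod_{k<\ell\le j}b_\ell$, I would compute $\espCond{S^2}{\frx_i}=\sum_{j,m}\esp[T_jT_m\mid\frx_i]$ by splitting the ordered pairs $j\le m$ into three regimes: both indices at most $k$, the indices straddling $k$, and both exceeding $k$. Since a given index $\ell$ is of the same type ($a$ if $\ell\le k$, $b$ if $\ell>k$) in every factor, no cross term $a_\ell b_\ell$ ever arises, and using $b_\ell^2=b_\ell$ together with the moment identities each $\esp[T_jT_m\mid\frx_i]$ collapses to a monomial in $\rho:=1-2p+r$ and $\sigma:=1-p$: one finds $\esp[T_jT_m\mid\frx_i]=\rho^{\,j}\sigma^{\,m-j}$ whenever $j\le k$ and $\rho^{\,k}\sigma^{\,m-k}$ when $k<j\le m$. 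Summing the resulting geometric series (all convergent since $0\le\rho\le\sigma<1$, using $r\le p$ so that $1-\sigma=p>0$ and $1-\rho=2p-r>0$) produces a closed form for $\espCond{S^2}{\frx_i}$.

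Finally I would subtract $(\espCond{S}{\frx_i})^2=\sigma^2/p^2$ and simplify. The key reductions are $1-\rho=2p-r$, $\sigma-\rho=p-r$ and $p\rho-\sigma(1-\rho)=-(p-r)$, which collapse the $k$-free part to $(1-p)/p^2$ and send the remaining part, carrying the factor $(1-\rho^k)/(1-\rho)=\{1-(1-2p+r)^k\}/(2p-r)$, to exactly the stated correction $-\{1-(1-2p+r)^k\}(2-p)(p-r)/\{(2p-r)\,p^2\}$. Monotonicity in $k$ can then be re-read off this expression as an independent check: the correction has nonnegative coefficient $(2-p)(p-r)/p^2$ while $(1-\rho^k)/(1-\rho)$ increases in $k$ from $0$ toward $1/(1-\rho)$, recovering both the chain of inequalities and the limiting value $\varCond{\hat\xi_i}{\frx_i}$.
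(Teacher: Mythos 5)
Your proof is correct, and it reaches the closed-form variance by a genuinely different route from the paper's. Your global-conditioning identification $\hat\xi_i^k=\espCond{\frn_i}{\calG_k}$ with the decreasing filtration $\calG_k=\sigma\bigl((y_\ell)_{\ell\ge 1},(u_\ell)_{\ell>k}\bigr)$ delivers unbiasedness, finiteness and the full monotone chain in one stroke; the paper uses the same mechanism but only in its one-step form, noting $\rbEstim{i}{k+1}=\espCond{\rbEstim{i}{k}}{\frx_i,y,u_{k+2:\infty}}$ and invoking the law of total variance to get the recursion $\varCond{\rbEstim{i}{k}}{\frx_i}=\varCond{\rbEstim{i}{k+1}}{\frx_i}+\espCond{\varCond{\rbEstim{i}{k}}{\frx_i,y,u_{k+2:\infty}}}{\frx_i}$. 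Where you then compute $\espCond{S^2}{\frx_i}=\sum_{j,m}\esp[T_jT_m\mid\frx_i]$ head-on, the paper instead evaluates the recursion's increment exactly: conditionally on everything but $u_{k+1}$, the only randomness is the Bernoulli indicator at index $k+1$, so the conditional variance factorises as $\prod_{\ell\le k}(1-\alpha(\frx_i,y_\ell))^2$ times $T_{k+2}^2\,\alpha(\frx_i,y_{k+1})(1-\alpha(\frx_i,y_{k+1}))$ with $T_{k+2}$ a tail geometric of second moment $(2-p)/p^2$, giving the increment $(1-2p+r)^k\frac{2-p}{p^2}(p-r)$; telescoping and summing this geometric series yields the closed form, with monotonicity in $k$ free of charge since $r\le p$ makes every increment nonnegative. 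I checked your direct computation: with $\rho=1-2p+r$, $\sigma=1-p$, your regime analysis (exploiting $b_\ell^2=b_\ell$ so that $k<j\le m$ collapses to $\rho^k\sigma^{m-k}$, and the straddling case merges with $j\le m\le k$ into $\rho^j\sigma^{m-j}$) gives $\espCond{S^2}{\frx_i}=\frac{\rho(1-\rho^k)}{1-\rho}\cdot\frac{2-p}{p}+\rho^k\frac{(1-p)(2-p)}{p^2}$, and subtracting $\sigma^2/p^2$ does reduce to the stated expression via your identity $p\rho+\sigma-\rho=\sigma(1-\rho)$. In short: your argument is more elementary and self-contained (no auxiliary geometric variable, no recursion) but algebra-heavier, whereas the paper's recursion buys lighter bookkeeping and makes both the variance ordering and the $k\to\infty$ limit toward $\varCond{\hat\xi_i}{\frx_i}$ structurally transparent; your reading of the first product in \eqref{eq:defRBestim} as running over $y_\ell$ rather than the paper's (typographical) $y_j$ is also the correct one.
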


The truncation at the $k$th proposal thus allows for a calibration of
the computational effort since
$\hat\xi_i^k$ costs on average $k$ additional simulations of $y_j$ and
computations of $\alpha(\frx_i,y_j)$
to compute $\hat\xi_i^k$, when compared with the regular
Metropolis--Hastings weight
$\frn_i$.
\begin{pf*}{Proof of Proposition \ref{prop:var}}
Define $y=(y_j)_{j\geq1}$ and $u_{k\dvtx\infty}=(u_\ell)_{\ell\geq k}$.
Note that $\hat\xi_i^0=\frn_i$ and therefore
the conditional variance of $\hat\xi_i^0$ is the variance of a
geometric variable. Now, obviously,
$\rbEstim{i}{k+1}=\mathbb{E} [ \rbEstim{i}{k}
|\frx_i,y,u_{k+2\dvtx\infty}]$;
thus, we have
\[
\mathbb{V} [ \rbEstim{i}{k}|\frx_i
]=\mathbb{V} [ \rbEstim{i}{k+1}|\frx_i]+
\mathbb{E} [ \mathbb{V} [ \rbEstim
{i}{k}|\frx_i,y, u_{k+2\dvtx\infty}]|\frx_i
] .
\]
To get a closed-form expression for the second term on the right-hand
side, we first introduce a geometric random variable $T_k$ defined by
\[
T_k =1+ \sum_{j=1}^\infty\prod_{\ell\le j} \mathbb{I}\{
u_{k+\ell
} \ge\alpha(\frx_i,y_{k+\ell})\} .
\]
Then, by straightforward algebra, $\rbEstim{i}{k}$ may be rewritten as
\[
\rbEstim{i}{k}=C+ \Biggl(\prod_{\ell=1}^k \{ 1 - \alpha(\frx_i,y_j)
\} \Biggr)T_{k+2}
\un\{u_{k+1}>\alpha(\frx_i,y_{k+1})\},
\]
where $C$ does not depend on $u_1, \ldots, u_{k+1}$. Thus,
\[
\mathbb{V} [ \rbEstim{i}{k}|\frx_i, y,
u_{k+2\dvtx\infty}]=
\Biggl(\prod_{\ell=1}^k \{ 1 - \alpha(\frx_i,y_j)
\}^2 \Biggr) T_{k+2}^2 \alpha\{ \frx
_i,y_{k+1})\bigl(1-\alpha(\frx
_i,y_{k+1})\bigr).
\]
Taking the expectation of the above expression, we obtain
\[
\esp(\mathbb{V} [ \rbEstim{i}{k}|\frx
_i,y, u_{k+2\dvtx\infty}])
=\bigl(1-2p(\frx_i)+r(\frx_i)\bigr)^k\biggl( \frac{2-p(\frx_i)}{p^2(\frx_i)}
\biggr)\bigl(p(\frx_i)-r(\frx_i)\bigr) ,
\]
which completes the proof.
\end{pf*}

%s3 ###
\section{Convergence properties}\label{sec:con}

Using those Rao--Blackwellized versions of $\delta$ brings about an
asymptotic improvement
for the estimation of $\mathbb{E}_\pi[h(X)]$, as shown by the following
result which, for any $M>0$, compares the estimators $(k \geq0)$
\[
\delta_M^k = \frac{\sum_{i=1}^M \hat\xi_i^k h(\frx_i)}{\sum
_{i=1}^M \hat
\xi_i^k} .
\]
For any positive function $\varphi$, we denote by ${\mathcal
C}_{\varphi}=\{
h;
|h/\varphi|_\infty<\infty\}$
the set of functions bounded by $\varphi$ up to a constant and we assume
that the reference importance sampling estimator is sufficiently well behaved,
in that there exist positive functions $\varphi\geq1$ and $\psi$
such that
%
%e5 ###
%e4 ###
%
\begin{eqnarray}
\label{eq:lgnInduite}
\forall h \in{\mathcal C}_{\varphi}\qquad \frac{\sum_{i=1}^M
h(\frx
_i)/p(\frx
_i)}{\sum_{i=1}^M 1/p(\frx_i)}&\plim&\pi(h),\\
\label{eq:cltInduite}
\forall h \in{\mathcal C}_{\psi}\qquad \sqrt{M}\biggl( \frac{\sum_{i=1}^M
h(\frx_i)/p(\frx_i)}{\sum_{i=1}^M 1/p(\frx_i)}-\pi(h)\biggr)
&\dlim&\calN
(0,\Gamma(h)).
\end{eqnarray}
\begin{theorem}\label{thm:primo}
Under the assumption that $\pi(p)>0$, the following convergence
properties hold:
\begin{longlist}
\item if $h$ is in ${\mathcal C}_{\varphi}$, then
\[
\delta_M^k \mathop{\plim}_{M \to\infty} \pi(h);
\]
\item if, in addition, $h^2/p \in{\mathcal C}_{\varphi}$ and
$h\in
{\mathcal C}_{\psi}$, then
%
%e6 ###
%
\begin{equation} \label{eq:CLTk}
\sqrt{M}\bigl(\delta_M^k - \pi(h)\bigr) \mathop{\dlim}_{M \to\infty} \calN\bigl(0, V_k[h
-\pi(h)]\bigr) ,
\end{equation}
where $V_k(h):=\pi(p) \int\pi({d}\frx) \mathbb{V}
[ \hat\xi_i^k|\frx]h^2(\frx)p(\frx) +\Gamma(h)$.
\end{longlist}
\end{theorem}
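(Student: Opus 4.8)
The plan is to exploit the fact that, conditionally on the accepted chain $(\frx_i)_i$, the weights $\hat\xi_i^k$ are independent with conditional mean $1/p(\frx_i)$ and conditional variance given by Proposition \ref{prop:var}, so that $\delta_M^k$ behaves like the reference importance-sampling ratio of \eqref{eq:lgnInduite}--\eqref{eq:cltInduite} perturbed by a conditionally centred noise. First I would observe that $(\frx_i,\hat\xi_i^k)_i$ is again a Markov chain, whose invariant law has $\frx$-marginal $\tilde\pi$ and satisfies $\espCond{\hat\xi_i^k}{\frx_i}=1/p(\frx_i)$. For part (i), applying the ergodic theorem (inherited from the ergodicity underlying \eqref{eq:lgnInduite}) to the functions $(\frx,\xi)\mapsto \xi h(\frx)$ and $(\frx,\xi)\mapsto\xi$ gives, for $h\in\func{\varphi}$, $\frac1M\sum_{i=1}^M\hat\xi_i^k h(\frx_i)\plim\tilde\pi(h/p)=\pi(h)/\pi(p)$ and $\frac1M\sum_{i=1}^M\hat\xi_i^k\plim\tilde\pi(1/p)=1/\pi(p)$, both limits being finite because $\pi(p)>0$ and $h\in\func{\varphi}$; the ratio then converges to $\pi(h)$ by Slutsky. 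Only first conditional moments enter here, so no integrability beyond $\func{\varphi}$ is needed.

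For part (ii), writing $\bar h=h-\pi(h)$, I would first reduce to a numerator central limit theorem. Since $\sqrt M(\delta_M^k-\pi(h))=\left(\frac1{\sqrt M}\sum_{i=1}^M\hat\xi_i^k\bar h(\frx_i)\right)\big/\left(\frac1M\sum_{i=1}^M\hat\xi_i^k\right)$ and the denominator tends to $1/\pi(p)$ by part (i) applied with $h\equiv1$, it suffices by Slutsky to identify the limit law of $S_M:=\frac1{\sqrt M}\sum_{i=1}^M\hat\xi_i^k\bar h(\frx_i)$ and then rescale its limiting variance by $\pi(p)^2$. The key step is the decomposition $S_M=S_M^{(1)}+S_M^{(2)}$, with $S_M^{(1)}=\frac1{\sqrt M}\sum_i\bar h(\frx_i)/p(\frx_i)$ and $S_M^{(2)}=\frac1{\sqrt M}\sum_i(\hat\xi_i^k-1/p(\frx_i))\bar h(\frx_i)$, separating the fluctuations coming from the chain from those coming from the noisy weights.

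The first piece is handled directly by \eqref{eq:cltInduite}: the self-normalised error there equals $S_M^{(1)}\big/(\frac1M\sum_{i=1}^M 1/p(\frx_i))$, since $\sum_i h(\frx_i)/p(\frx_i)\big/\sum_i 1/p(\frx_i)-\pi(h)=\frac1M\sum_i\bar h(\frx_i)/p(\frx_i)\big/\frac1M\sum_i 1/p(\frx_i)$; as the denominator tends to $1/\pi(p)$, \eqref{eq:cltInduite} gives $S_M^{(1)}\dlim\calN(0,\Gamma(h)/\pi(p)^2)$, where $\Gamma(\bar h)=\Gamma(h)$ by translation invariance of the self-normalised asymptotic variance. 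For the second piece, conditionally on $\calF=\sigma(\frx_i,\,i\ge1)$ the summands of $S_M^{(2)}$ are independent and centred with conditional variance $\varCond{\hat\xi_i^k}{\frx_i}\bar h^2(\frx_i)$; the ergodic theorem gives $\frac1M\sum_i\varCond{\hat\xi_i^k}{\frx_i}\bar h^2(\frx_i)\plim\sigma^2:=\frac1{\pi(p)}\int\pi(\mathrm d\frx)\varCond{\hat\xi_i^k}{\frx}\bar h^2(\frx)p(\frx)$, finite precisely because $h^2/p\in\func{\varphi}$ together with the bound $\varCond{\hat\xi_i^k}{\frx}\le\{1-p(\frx)\}/p^2(\frx)$ from Proposition \ref{prop:var}. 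A conditional Lindeberg central limit theorem then yields $S_M^{(2)}\dlim\calN(0,\sigma^2)$, and because the conditional variance converges to the deterministic constant $\sigma^2$, this limit is independent of $\calF$, hence of $S_M^{(1)}$. Computing the joint characteristic function by conditioning on $\calF$ shows $S_M\dlim\calN(0,\Gamma(h)/\pi(p)^2+\sigma^2)$; rescaling by $\pi(p)^2$ produces $\Gamma(h)+\pi(p)\int\pi(\mathrm d\frx)\varCond{\hat\xi_i^k}{\frx}\bar h^2(\frx)p(\frx)=V_k[h-\pi(h)]$, as claimed.

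The main obstacle is the treatment of $S_M^{(2)}$: verifying the conditional Lindeberg condition $\frac1M\sum_i\espCond{\zeta_i^2\un\{|\zeta_i|>\epsilon\sqrt M\}}{\frx_i}\plim0$ for $\zeta_i=(\hat\xi_i^k-1/p(\frx_i))\bar h(\frx_i)$, and simultaneously justifying that the asymptotically constant conditional variance makes $S_M^{(2)}$ asymptotically independent of the chain fluctuation $S_M^{(1)}$, so that the two asymptotic variances add. The Lindeberg step should follow, under the stated second-moment condition alone, from a dominated-convergence argument combined with the ergodic theorem, since the integrand is dominated by the $\tilde\pi$-integrable function $\varCond{\hat\xi_i^k}{\frx}\bar h^2(\frx)$ and decreases to $0$ pointwise as the truncation level $\epsilon\sqrt M$ diverges. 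Keeping track of these conditioning arguments, rather than any single computation, is where the care is needed.
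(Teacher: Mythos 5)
Your proposal follows essentially the same route as the paper's proof: the same Slutsky reduction of $\sqrt M(\delta_M^k-\pi(h))$ to a CLT for the numerator, the same split $\hat\xi_i^k \bar h(\frx_i)=\bar h(\frx_i)/p(\frx_i)+(\hat\xi_i^k-1/p(\frx_i))\bar h(\frx_i)$ with the first piece handled by the assumed CLT \eqref{eq:cltInduite} (your translation-invariance remark $\Gamma(\bar h)=\Gamma(h)$ and the $1/\pi(p)^2$ rescaling match the paper exactly), and the second piece handled by a conditionally centred CLT whose conditional variance converges to a constant, with additivity of the two variances obtained through conditional characteristic functions. That last mechanism is precisely what the paper imports wholesale from Douc and Moulines (2008, Theorems 11 and 13): their triangular-array results are proved via convergence of conditional characteristic functions (stable convergence), so your conditioning on the full chain $\sigma((\frx_i)_{i\ge1})$ is a cosmetic variant of the paper's martingale filtration $\calF_i=\sigma(\frx_1,\ldots,\frx_{i+1},\hat\xi_1^k,\ldots,\hat\xi_i^k)$. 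One structural point you assert without proof is that, given the \emph{whole} accepted path, the weights remain independent with conditional mean $1/p(\frx_i)$; this needs the extension of Lemma \ref{lem:induite}(2) to $\hat\xi_i^k$, i.e.\ conditional independence of $\hat\xi_i^k$ and $\frx_{i+1}$ given $\frx_i$, which holds by the block structure of the auxiliary $(y_j,u_j)$ and is built implicitly into the paper's filtration.

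The one genuine gap is your repeated appeal to ``the ergodic theorem'' for the augmented chain $(\frx_i,\hat\xi_i^k)$. The paper's hypotheses consist only of the self-normalised limits \eqref{eq:lgnInduite}--\eqref{eq:cltInduite} for the $\frx$-chain; no ergodicity of any augmented chain is assumed, and none is needed. What is actually available is the unnormalised law of large numbers
$$
M^{-1}\sum_{i=1}^M g(\frx_i)/p(\frx_i)\plim \pi(g)/\pi(p)\,,\qquad g\in\func{\varphi}\,,
$$
derived from \eqref{eq:lgnInduite} applied to $g$ and to $p$ (using $\varphi\ge1$ and $\pi(p)>0$); this legitimately covers your variance-convergence step in (ii), since $\varCond{\hat\xi_i^k}{\frx}h^2(\frx)p(\frx)\le h^2(\frx)/p(\frx)\in\func{\varphi}$. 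But in part (i) your claim that ``only first conditional moments enter here, so no integrability beyond $\func{\varphi}$ is needed'' asserts away the real work: since the $\hat\xi_i^k$ are unbounded, the noise term $M^{-1}\sum_i(\hat\xi_i^k-1/p(\frx_i))h(\frx_i)$ does not vanish for free, and one needs the truncation-plus-dominated-convergence argument the paper runs through $F_C(\frx):=|h(\frx)|\,\espCond{\hat\xi_i^k\un\{\hat\xi_i^k>C\}}{\frx}\,p(\frx)\le|h(\frx)|$, letting $M\to\infty$ first and then $C\to\infty$ by Lebesgue. Since exactly this device appears in your own Lindeberg paragraph for part (ii), the omission in part (i) is repairable rather than a flaw of approach, but as written that step is unjustified under the stated assumptions.
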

\begin{pf}
We will prove that for all $g \in{\mathcal C}_{\varphi}$,
%
%e7 ###
%
\begin{equation}
\label{eq:lgn_xik}
M^{-1}\sum_{i=1}^M \hat\xi_i^k g(\frx_i) \plim\pi(g)/\pi(p) .
\end{equation}
Then,\vspace*{1pt} (i) directly follows from (\ref{eq:lgn_xik}) applied to both
$g=h$ and $g=1$.
Now, denote by $\calF_i$ the $\sigma$-field $\calF_i:=\sigma(\frx
_{1},\ldots,\frx_{i+1},\hat\xi_1^k,\ldots,\hat\xi_i^k)$. Since
$\mathbb{E} [ \hat\xi_i^k g(\frx_i)|\calF
_{i-1}]=g(\frx_i)/p(\frx
_i)$, we have
\[
M^{-1}\sum_{i=1}^M \hat\xi_i^k g(\frx_i)=
\Biggl( \sum_{i=1}^M U_{M,i} - \mathbb{E} [
U_{M,i}|\calF_{i-1}]\Biggr)
+ M^{-1}\sum_{i=1}^M g(\frx_i)/p(\frx_i)
\]
with $U_{M,i}:=M^{-1}\hat\xi_i^k g(\frx_i)$. First, consider the second
term on the right-hand side.
Since $\varphi\geq1$, the function $p$ is in ${\mathcal C}_{\varphi}$;
equation (\ref{eq:lgnInduite}) then implies that
$M/\{\sum_{i=1}^M 1/p(\frx_i)\} \plim\pi(p)>0$ and therefore that
% Plugging this into Eq. \eqref{eq:lgnInduite}, yields that
%
%e8 ###
%
\begin{equation}\label{eq:lgnUnNormalized_p}
\forall g \in{\mathcal C}_{\varphi}\qquad M^{-1}\sum_{i=1}^M g(\frx
_i)/p(\frx
_i) \plim\pi(g)/\pi(p) .
\end{equation}
It remains to check that $\sum_{i=1}^M U_{M,i} - \mathbb{E}
[ U_{M,i}|\calF_{i-1}] \plim0$.
We use asymptotic results for conditional triangular arrays of random
variables given in
Douc and Moulines (\citeyear{doucmoulines2008}), Theorem 11.
Obviously, since $|g|\in{\mathcal C}_{\varphi}$, we have
\[
\sum_{i=1}^M \mathbb{E} [ |U_{M,i}||\calF
_{i-1}] = M^{-1}\sum_{i=1}^M
|g(\frx_i)|/p(\frx_i) \plim\pi(|g|)/\pi(p)
\]
and we only need to show that $\sum_{i=1}^M \mathbb{E} [
|U_{M,i}|\un\{ |U_{M,i}|>\varepsilon\}|\calF_{i-1}] \plim0$.
Let $C>0$ and note that $\{|U_{M,i}|>\varepsilon\} \subset\{|g(\frx
_i)|>(\varepsilon M)/C\} \cup\{\hat\xi_i^k >C\}$.
Again, using $\mathbb{E} [ \hat\xi_i^k g(\frx_i)
|\calF_{i-1}]=g(\frx
_i)/p(\frx_i)$, we have
%
%e9 ###
%
\begin{eqnarray} \label{eq:tensionLGN}
&&\sum_{i=1}^M \mathbb{E} [ |U_{M,i}|\un\{
|U_{M,i}|>\varepsilon\}|\calF_{i-1}]\nonumber\\[-8pt]\\[-8pt]
&&\qquad\leq
\frac{1}{M} \sum_{i=1}^M \frac{|g(\frx_i)|\un\{|g(\frx
_i)|>(\varepsilon
M)/C\}}{p(\frx_i)}
+\frac{1}{M} \sum_{i=1}^M \frac{F_C(\frx_i)}{p(\frx_i)}
\nonumber
\end{eqnarray}
with $F_C(\frx_i):=|g(\frx_i)|\mathbb{E} [ \hat\xi
_i^k \un\{\xi_i^k>C\} |\frx_i] p(\frx_i)$. Since
$F_C\leq|g|$,
we have $F_C \in{\mathcal C}_{\varphi}$. Then, again using (\ref
{eq:lgnUnNormalized_p}),
we have
\begin{eqnarray*}
\frac{1}{M} \sum_{i=1}^M \frac{|g(\frx_i)|\un\{|g(\frx
_i)|>(\varepsilon
M)/C\}}{p(\frx_i)} &\plim& 0 , \\
\frac{1}{M} \sum_{i=1}^M \frac{F_C(\frx_i)}{p(\frx_i)} &\plim& \pi
(F_C)/\pi(p) ,
\end{eqnarray*}
which can be arbitrarily small when taking $C$ sufficiently large.
Indeed, using
Lebesgue's theorem in the definition of $F_C$, for any fixed $\frx$,
$\lim_{C \to\infty}F_C(\frx) =0$
and then, again using Lebesgue's theorem, $\lim_{C \to\infty}\pi
(F_C) =0$.
Finally, (\ref{eq:lgn_xik}) is proved. The proof of (i) follows.

We now consider (ii). Without loss of generality, we assume that
$\pi(h)=0$. Write
\[
\sqrt{M}\delta_M^k =\frac{M^{-1/2}\sum_{i=1}^M \hat\xi_i^k h(\frx
_i)}{M^{-1}\sum_{i=1}^M \hat\xi_i^k} .
\]
By (\ref{eq:lgn_xik}), the denominator of the right-hand side converges
in probability to $1/\pi(p)$. Thus,
by Slutsky's lemma, we only need to prove a central limit theorem for
the numerator of the right-hand side.
Define $U_{M,i}:=M^{-1/2}\hat\xi_i^k h(\frx_i)$ and write
\[
M^{-1/2}\sum_{i=1}^M \hat\xi_i^k h(\frx_i)=\Biggl( \sum_{i=1}^M U_{M,i}
- \mathbb{E} [ U_{M,i}|\calF_{i-1}]\Biggr)
+M^{-1/2} \sum_{i=1}^M h(\frx_i)/p(\frx_i) .
\]
Since $h \in{\mathcal C}_{\psi}$ and $M^{-1}\sum_{i=1}^M 1/p(\frx
_i) \plim
1/\pi(p)$, the second term, thanks again to Slutsky's lemma and
equation (\ref{eq:cltInduite}), converges in distribution to $\calN
(0,\allowbreak\Gamma(h)/\pi^2(p))$. Now, consider
the first term on the right-hand side. We will once again use
asymptotic results on triangular arrays of random variables
[as in Douc and Moulines (\citeyear{doucmoulines2008}), Theorem 13].
We have
\begin{eqnarray*}
&&\sum_{i=1}^M \mathbb{E} [ U_{M,i}^2|\calF
_{i-1}]-(\mathbb{E} [ U_{M,i}|\calF
_{i-1}])^2 \\
&&\qquad=M^{-1}\sum_{i=1}^M (h^2(\frx_i) \mathbb{V} [
\hat\xi_i^k|\frx_i]p(\frx_i))/p(\frx_i)\\
&&\qquad\plim
\pi[ \mathbb{V} [ \hat\xi_i^k|\cdot
]h^2(\cdot)p(\cdot)]/ \pi(p) ,
\end{eqnarray*}
by (\ref{eq:lgnUnNormalized_p}) applied to the nonnegative function
$\frx_i \mapsto h^2(\frx_i) \mathbb{V} [ \hat\xi
_i^k|\frx_i]p(\frx_i)$
which is in ${\mathcal C}_{\varphi}$ since it is bounded from above by $h^2/p
\in{\mathcal C}_{\varphi}$.
It remains to show that, for any $\varepsilon>0$,
%
%e10 ###
%
\begin{equation}\label{eq:tensionCLT}
\sum_{i=1}^M \mathbb{E} \bigl[ |U_{M,i}|^2\un
_{|U_{M,i}|>\varepsilon}|\calF_{i-1}\bigr]
\plim0 .
\end{equation}
Following the same lines as in the proof of (i), note that for any $C>0$,
we have $\{|U_{M,i}|>\varepsilon\} \subset\{|h(\frx_i)|>(\varepsilon
\sqrt
{M})/C\} \cup\{\hat\xi_i^k >C\}$. Using the fact that
\[
\mathbb{E} [ (\hat\xi_i^k)^2 |\calF
_{i-1}]= \mathbb{V} [ \hat\xi_i^k|\frx
_i]+(\mathbb{E} [ \hat\xi_i^k|\frx
_i])^2\leq
2/p^2(\frx_i),
\]
we have
\begin{eqnarray*}
&&\sum_{i=1}^M \mathbb{E} [ |U_{M,i}|\un\{
|U_{M,i}|>\varepsilon\}|\calF_{i-1}]\\
&&\qquad\leq\frac{2}{M} \sum_{i=1}^M \frac{h^2(\frx_i)\un\{|h(\frx
_i)|>(\varepsilon\sqrt{M})/C\}}{p^2(\frx_i)}
+\frac{1}{M} \sum_{i=1}^M \frac{F_C(\frx_i)}{p(\frx_i)}
\end{eqnarray*}
with $F_C(\frx_i):=h^2(\frx_i)\mathbb{E} [ (\hat
\xi_i^k )^2 \un\{ \xi_i^k>C\}|\frx_i] p(\frx
_i)$. Since $F_C\leq(2h^2)/p$ and $h^2/p
\in{\mathcal C}_{\varphi}$, we have $F_C \in{\mathcal C}_{\varphi
}$. Then, again
using (\ref{eq:lgnUnNormalized_p}),
\begin{eqnarray*}
\frac{1}{M} \sum_{i=1}^M \frac{(h^2(\frx_i)/p(\frx_i))\un\{
|h(\frx
_i)|>(\varepsilon\sqrt{M})/C\}}{p(\frx_i)}&\plim&0 , \\
\frac{1}{M} \sum_{i=1}^M \frac{F_C(\frx_i)}{p(\frx_i)} &\plim& \pi
(F_C)/\pi(p) ,
\end{eqnarray*}
which can be made arbitrarily small by taking $C$ sufficiently large.
Indeed, as in the proof of (i), one can use Lebesgue's theorem in
the definition of
$F_C$ so that for any fixed $\frx$, $\lim_{C \to\infty}F_C(\frx) =0$.
Then, again using Lebesgue's theorem,
$\lim_{C \to\infty}\pi(F_C) =0$. Finally, (\ref{eq:tensionCLT}) is
proved. The proof of (ii) follows.
\end{pf}

The main consequence of this central limit theorem is thus that,
asymptotically, the correlation between the $\xi_i$'s vanishes, hence that
the variance ordering on the $\xi_i$'s extends to the same ordering on
the $\delta_M$'s.

It remains to link the central limit theorem of the usual Markov chain
Monte Carlo (MCMC) estimator (\ref{eq:estimatorMCMC}) with the
central limit theorem expressed in (\ref{eq:CLTk}), with $k=0$
associated with the accepted values.
We will need some additional assumptions, starting with a maximal
inequality for the Markov chain
$(\frx_i)_{i }$: there exists a measurable function $\zeta$ such that
for any starting point $x$, %\note{Rd}{La, j ai fait demarre a $i,j=0$
%parce que ca me semblait necessaire dans la preuve, dans le reste du
%papier, a chaque fois, on demarre a 1 plutot qu a 0}
%
%e11 ###
%
\begin{equation}\label{eq:defMaxim}
\forall h \in{\mathcal C}_{\zeta}\qquad \prob_x\Biggl(\Biggl|\sup
_{0\leq i
\leq
N} \sum_{j=0}^i [ h(\frx_j)-\tilde\pi(h)]
\Biggr|>\varepsilon
\Biggr) \leq\frac{ N C_h(x)}{\varepsilon^2},
\end{equation}
where $\prob_x$ is the probability measure induced by the Markov chain
$(\frx_i)_{i \geq0}$ starting from $\frx_0=x$.

Moreover, we assume that there exists a measurable function $\phi\geq
1$ such that for any starting point $x$,
%
%e12 ###
%
\begin{equation} \label{eq:convTildeQ}
\forall h \in{\mathcal C}_{\phi}\qquad \tilde Q^n(x,h)\plim\tilde
\pi
(h)=\pi
(ph)/\pi(p) ,
\end{equation}
where $\tilde Q$ is the transition kernel of $(\frx_i)_{i }$ expressed
in Lemma \ref{lem:induite}.
%d hypothese a part entiere avec un indice genre A1 et A2 sur laquelle
%on met une reference dans les theoremes. }
%
\begin{theorem} \label{thm:lienMCMCInduite} In addition to the
assumptions of Theorem \ref{thm:primo},
assume that $h$ is a measurable function such that
$h/p \in{\mathcal C}_{\zeta}$ and $\{C_{h/p},h^2/p^2\} \subset
{\mathcal C}_{\phi}$.
Assume, moreover, that %\note{Rd}{C est un peu le bordel ces hypotheses
%du thm 2 mais je vois pas comment les ecrire autrement...}
%
\[
\sqrt{M}\bigl( \delta_M^0-\pi(h)\bigr) \dlim\calN\bigl(0,V_0[h-\pi
(h)]\bigr) .
\]
Then, for any starting point $x$,
\[
\sqrt{M_N}\biggl( \frac{\sum_{t=1}^N h(x^{(t)})}{N}-\pi(h)\biggr)
\mathop{\dlim}
_{N \to\infty} \calN\bigl(0,V_0[h-\pi(h)]\bigr) ,
\]
where $M_N$ is defined by
%
%e13 ###
%
\begin{equation}
\label{eq:defMn}
\sum_{i=1}^{M_N} \hat\xi_i^0 \leq N < \sum_{i=1}^{M_N+1} \hat\xi
_i^0 .
\end{equation}
\end{theorem}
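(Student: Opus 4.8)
The plan is to exploit the renewal (block) structure of the \mh~path. Up to time $N$ the chain $(x^{(t)})_{t\le N}$ consists of $M_N$ completed sojourns, the $i$-th being the accepted value $\frx_i$ repeated $\frn_i=\hat\xi_i^0$ times, followed by a partial sojourn at $\frx_{M_N+1}$. Writing $S_m:=\sum_{i=1}^m\hat\xi_i^0$, definition \eqref{eq:defMn} gives $S_{M_N}\le N<S_{M_N+1}$ and the exact identity
\[
\frac{1}{N}\sum_{t=1}^N h(x^{(t)})=\frac{S_{M_N}}{N}\,\delta_{M_N}^0+\frac{N-S_{M_N}}{N}\,h(\frx_{M_N+1}),\qquad 0\le N-S_{M_N}<\frn_{M_N+1}.
\]
As in the proof of Theorem \ref{thm:primo} we may assume $\pi(h)=0$. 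It then suffices to prove three facts: (a) $S_{M_N}/N\plim 1$; (b) the random-index CLT $\sqrt{M_N}\,\delta_{M_N}^0\dlim\calN(0,V_0[h])$; and (c) the overshoot is negligible, $\sqrt{M_N}\,(N-S_{M_N})h(\frx_{M_N+1})/N\plim 0$. Slutsky's lemma applied to the identity above then delivers the claim.

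For (a), observe first that $M_N\to\infty$ almost surely, since $S_m$ is finite for each fixed $m$ while $S_{M_N+1}>N\to\infty$. The ergodic theorem for the $\tilde\pi$-reversible chain $(\frx_i,\frn_i)$ — the strong-law counterpart of \eqref{eq:lgn_xik} with $g\equiv1$ — yields $S_m/m\to\tilde\pi(1/p)=1/\pi(p)$ almost surely. Evaluating along $M_N\to\infty$ and $M_N+1\to\infty$ and using the sandwich $S_{M_N}\le N<S_{M_N+1}$ gives $N/M_N\to 1/\pi(p)$, i.e.\ $M_N/N\to\pi(p)$, whence $S_{M_N}/N=(S_{M_N}/M_N)(M_N/N)\to 1$.

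Fact (b) is the heart of the argument and is a random time change: the hypothesis provides the CLT at the deterministic index $M$, whereas $M_N$ is chain-dependent with $M_N/(\pi(p)N)\to1$. I would invoke an Anscombe-type theorem, whose single non-trivial input is a uniform-continuity-in-probability condition on the partial-sum process $W_m:=m^{-1/2}\sum_{i\le m}\frn_i h(\frx_i)$, namely $\lim_{\eta\downarrow0}\limsup_m\prob_x(\max_{m\le m'\le(1+\eta)m}|W_{m'}-W_m|>\epsilon)=0$ for every $\epsilon>0$. To verify it I would split the weighted increment across a window of length at most $\eta m$ as $\frn_i h(\frx_i)=(h/p)(\frx_i)+\{\frn_i-1/p(\frx_i)\}h(\frx_i)$. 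Since $\tilde\pi(h/p)=\pi(h)/\pi(p)=0$, the fluctuation of $\sum(h/p)(\frx_i)$ over the window is controlled by the maximal inequality \eqref{eq:defMaxim} applied to $h/p\in\func{\zeta}$, the starting-point constant being tamed through $\tilde Q^m(x,C_{h/p})\to\tilde\pi(C_{h/p})$, which is exactly why $C_{h/p}\in\func{\phi}$ is assumed in \eqref{eq:convTildeQ}. The second term is a martingale-increment sum for $(\calF_i)$, as $\espCond{\{\frn_i-1/p(\frx_i)\}h(\frx_i)}{\calF_{i-1}}=0$; Doob's $L^2$ maximal inequality bounds its squared window fluctuation by a constant times $\eta m\,\tilde\pi(\{(1-p)/p^2\}h^2)$, finite because $h^2/p^2\in\func{\phi}$. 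Dividing by $m$ yields an $\mathrm{O}(\eta)$ bound and the condition follows; this, with $M_N/(\pi(p)N)\to1$, gives (b). This transfer to the random, data-dependent number of accepted values is the main obstacle, and it is here that both structural assumptions — the maximal inequality for the $\tilde\pi$-centred functional $h/p$ and the martingale structure of the geometric weights — are consumed.

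Finally, for (c) it suffices that $\frn_{M_N+1}|h(\frx_{M_N+1})|=o_{\prob}(\sqrt N)$, since the overshoot is then at most $(\sqrt{M_N}/N)\,\frn_{M_N+1}|h(\frx_{M_N+1})|=\sqrt{M_N/N}\,o_{\prob}(1)=o_{\prob}(1)$. As $M_N/N\to\pi(p)$, with probability tending to one $M_N+1\le K_N:=\lceil 2\pi(p)N\rceil$, so I would bound the straddling term by $\max_{i\le K_N}\frn_i|h(\frx_i)|$ and apply a union bound with the conditional Markov estimate $\prob_x(\frn_i|h(\frx_i)|>\epsilon\sqrt N\mid\frx_i)\le(\epsilon^2N)^{-1}G_N(\frx_i)$, where $G_N(\frx):=\espCond{\frn_i^2 h^2(\frx_i)\un\{\frn_i|h(\frx_i)|>\epsilon\sqrt N\}}{\frx_i=\frx}$ decreases to $0$ pointwise and is dominated by $\{(2-p)/p^2\}h^2\le 2h^2/p^2\in\func{\phi}$. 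I would then show that the resulting bound $(\epsilon^2N)^{-1}\sum_{i\le K_N}\tilde Q^i(x,G_N)$ vanishes by replacing $G_N$ by a fixed $G_L$ with $\tilde\pi(G_L)$ as small as desired (monotonicity $G_N\downarrow0$) and letting the Cesàro average converge via \eqref{eq:convTildeQ}. This overshoot control is the remaining genuine difficulty.
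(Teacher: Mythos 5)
Your proposal is correct and, at its core, coincides with the paper's proof: both arguments reduce the theorem to a random-index CLT for the weighted sums $\sum_i \hat\xi_i^0 h(\frx_i)$ and consume the hypotheses in exactly the same two places, namely the maximal inequality \eqref{eq:defMaxim} applied to the $\tilde\pi$-centred function $h/p$ (with $C_{h/p}\in\func{\phi}$ taming the starting point through $\tilde Q^m(x,C_{h/p})\to\tilde\pi(C_{h/p})$, via \eqref{eq:convTildeQ}) and a Kolmogorov/Doob $L^2$ bound for the martingale $\sum_i\{\hat\xi_i^0-1/p(\frx_i)\}h(\frx_i)$, controlled by $h^2/p^2\in\func{\phi}$. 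The packaging differs: the paper never names Anscombe's theorem but carries out the same computation by hand, writing $N^{-1/2}\sum_{t=1}^N h(x^{(t)})=\Delta_{N,1}+\Delta_{N,2}+\Delta_{N,3}$ with the deterministic index $M_N^\star=\lfloor N\pi(p)\rfloor$, bounding $\Delta_{N,2}$ over the windows $[M_N^\star(1-\alpha),M_N^\star(1+\alpha)]$ --- which is precisely your uniform-continuity-in-probability verification, split into the same $A_N$ (the $h/p$ part) and $B_N$ (the martingale part) --- and then applying the hypothesised CLT at the deterministic index $M_N^\star$ for $\Delta_{N,3}$; your formulation makes the random-time-change structure explicit, which is arguably cleaner. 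Your overshoot control (c) is genuinely different and in fact more careful than the paper's: the paper disposes of $\Delta_{N,1}$ with a single Markov bound yielding $\tilde Q^{M_N+1}(x,|h|/p)\big/(\epsilon\sqrt N)$, with a random exponent $M_N+1$ whose expectation is glossed over, whereas your union bound over $i\le K_N$ with the truncated second moments $G_N\le 2h^2/p^2\in\func{\phi}$ and a Ces\`aro use of \eqref{eq:convTildeQ} avoids that looseness at the cost of length; both rest on the same assumptions. One overstatement you should repair: in step (a) you invoke an almost-sure LLN, $S_m/m\to 1/\pi(p)$ a.s., as the ``strong-law counterpart'' of \eqref{eq:lgn_xik}, but the standing assumptions only deliver convergence in probability. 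This is harmless, since --- as in the paper --- $M_N/N\plim\pi(p)$, and hence $S_{M_N}/N\plim 1$, already follows in probability by dividing \eqref{eq:defMn} by $M_N$, the monotonicity of $m\mapsto S_m$ being what allows the in-probability LLN to be evaluated along the random index.
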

\begin{pf}
Without loss of generality, we assume that $\pi(h)=0$. In this proof,
we will denote by $\prob_x$ (resp., $\esp_x$)
the probability\vspace*{1pt} (resp., expectation) associated with the Markov chain
$(x^{(t)})_{t \geq0}$ starting from a fixed point $x$.
%alleger les choses mais on doit quand meme definir $\prob_x$ et $
Using (\ref{eq:lgn_xik}) with $g=1$, one may divide (\ref{eq:defMn}) by
$M_N$ and let $N$ go to infinity. This yields
that $M_N/N \plim\pi(p)>0$. Then, by Slutsky's lemma, Theorem \ref
{thm:lienMCMCInduite} will be proven if we are able to show that
\[
\sqrt{N}\biggl( \frac{\sum_{t=1}^N h(x^{(t)})}{N}-\pi(h)\biggr)
\mathop{\dlim}_{N
\to\infty} \calN\bigl(0,V_0[h-\pi(h)]/\pi(p)\bigr) .
\]
To that end, consider the decomposition
\[
N^{-1/2}\sum_{t=1}^N h\bigl(x^{(t)}\bigr):=\Delta_{N,1}+\Delta_{N,2}+\Delta
_{N,3} ,
\]
where $M_N^\star:=\lfloor N \pi(p)\rfloor$,
\begin{eqnarray*}
\Delta_{N,1}&:=&N^{-1/2} \Biggl(N - \sum_{i=1}^{M_N} \hat\xi_i^0
\Biggr)h(\frx_{M_N+1}) ,\\
\Delta_{N,2}&:=&N^{-1/2}\Biggl(\sum_{i=1}^{M_N} \hat\xi_i^0 h(\frx_i)-
\sum_{i=1}^{M_N^\star} \hat\xi_i^0 h(\frx_i)\Biggr) ,\\
\Delta_{N,3}&:=&N^{-1/2}\sum_{i=1}^{M_N^\star} \hat\xi_i^0 h(\frx_i) .
\end{eqnarray*}
Using the fact that $0\leq N - \sum_{i=1}^{M_N}\hat\xi_i^0\leq\hat
\xi
_{M_N+1}^0$ and Markov's inequality, we have
\[
\prob_x(|\Delta_{N,1}|>\varepsilon) \leq\frac{\esp_x(\hat\xi_{M_N+1}^0
|h(\frx_{M_N+1})|)}{\varepsilon\sqrt{N}}=\frac{\tilde
Q^{M_N+1}(x,|h|/p)}{\varepsilon\sqrt{N}},
\]
which converges in probability to 0 using the facts that $|h|/p \leq
h^2/p^2+1$ and $\{h^2/p^2, 1\} \subset{\mathcal C}_{\phi}$.
Thus, $\Delta_{N,1}\plim0$. We now consider $\Delta_{N,2}$. Note that
%
%e14 ###
%
\begin{equation} \label{eq:majoDelta2Maxim}
\prob_x(|\Delta_{N,2}|>\varepsilon) \leq\prob_x\bigl(|A_N|>\varepsilon\sqrt
{N}/2\bigr)+ \prob_x\bigl(|B_N|>\varepsilon\sqrt{N}/2\bigr)
\end{equation}
with
\[
A_N=\sum_{i=M_N\wedge M_N^\star}^{M_N\vee M_N^\star} h(\frx
_i)/p(\frx
_i) \quad\mbox{and}\quad
B_N=\sum_{i=M_N\wedge M_N^\star}^{M_N\vee M_N^\star} \bigl( \hat\xi
_{j}^0-1/p(\frx_i)\bigr)h(\frx_i) .
\]
Now, pick an arbitrary $\alpha\in(0,1)$ and set $\underline
{M}_N:=M_N^\star(1-\alpha)$
and $\overline{M}_N:=M_N^\star(1+\alpha)$. Since $M_N/N \plim\pi(p)$
for all\vspace*{1pt} $\eta>0$, there exists $N_0$
such that for all $N \geq N_0$, $\prob_x(\underline{M}_N \leq M_N
\leq
\overline{M}_N)\geq1-\eta$. Then,
obviously for $N \geq N_0$, the first term on the right-hand side of
(\ref{eq:majoDelta2Maxim}) is bounded by
%
%e15 ###
%
\begin{eqnarray}\label{eq:majoAN}
&&\prob_x\bigl(|A_N|>\varepsilon\sqrt{N}/2\bigr) \nonumber\\
&&\qquad\leq \eta+ \prob_x\Biggl(\sup
_{M_N^{\star}
\leq i\leq\overline{M}_N}\Biggl| \sum_{j=M_N^\star}^i h(\frx
_{j})/p(\frx
_j)\Biggr|
>\varepsilon\sqrt{N}/2\Biggr)\\
&&\qquad\quad{} + \prob_x\Biggl(\sup_{\underline{M}_N^{\star} \leq i\leq M_N^\star
}\Biggl|
\sum_{j=i}^{M_N^\star} h(\frx_{j})/p(\frx_j)\Biggr| >\varepsilon
\sqrt
{N}/2\Biggr) .\nonumber
\end{eqnarray}
Using (\ref{eq:defMaxim}), the second term of the right-hand side is
bounded by
\[
4 \overline{M}_N -M_N^\star\esp_x[C_{h/p}(\frx_{M_N^\star
})]/{\varepsilon
^2 N} ,
\]
which converges to $4\alpha\pi(p) \tilde\pi(C_{h/p})/\varepsilon^2$
as $N$
goes to infinity, using the fact that $C_{h/p} \in{\mathcal C}_{\phi
}$. The
resulting bound can thus be arbitrarily small as $\alpha$ goes to 0.
Similarly, one can bound the third term on the right-hand side of
(\ref{eq:majoAN}) and let $N$ go to infinity. Again letting $\alpha$
go to 0, we obtain that $A_N/\sqrt{N} \plim0$. Similarly, the second
term of the right-hand side of (\ref{eq:majoDelta2Maxim}) is bounded by
%
%e16 ###
%
\begin{eqnarray}\label{eq:majoBN}
&&\prob_x\bigl(|B_N|>\varepsilon\sqrt{N}/2\bigr) \nonumber\\
&&\qquad\leq\eta+ \prob_x\Biggl(\sup
_{M_N^{\star} \leq i\leq\overline{M}_N}\Biggl| \sum_{j=M_N^\star}^i
\biggl(\hat\xi_{j}^0-\frac{1}{p(\frx_j)}\biggr)h(\frx_{j})\Biggr|
>\varepsilon\sqrt{N}/2\Biggr)\\
&&\qquad\quad{}+ \prob_x\Biggl(\sup_{\underline{M}_N^{\star} \leq i\leq M_N^\star}
\Biggl| \sum_{j=i}^{M_N^\star} \biggl(\hat\xi_{j}^0-\frac{1}{p(\frx
_j)}\biggr) h(\frx_{j})\Biggr| >\varepsilon\sqrt{N}/2\Biggr).\nonumber
\end{eqnarray}
We write $R_N=\sum_{\ell=1}^N ( \hat\xi_{\ell}^0-\frac
{1}{p(\frx
_\ell)})h(\frx_{\ell})$.
Clearly, $(R_N)$ is a $\calF$-martingale where $\calF=(\calF_{i})_{i
\geq1}$ and $\calF_i$ is
the $\sigma$-field $\calF_i:=\sigma(\frx_{1},\ldots,\frx
_{i+1},\hat\xi
_1^0,\ldots,\hat\xi_i^0)$. Then,
by Kolmogorov's inequality, one can bound the second term of (\ref
{eq:majoBN}) in the following way:
\begin{eqnarray*}
&&
\prob_x\Bigl({\sup_{M_N^{\star} \leq i\leq\overline{M}_N}}|
R_i-R_{M_N}| >\varepsilon
\sqrt{N}/2\Bigr) \\
&&\qquad\leq 4 \frac{\esp_x[(R_{M_N^{\star
}}-R_{M_N})^2]}{\varepsilon^2 N }
=\frac{4}{\varepsilon^2 N} \esp_x\Biggl[\sum_{i=M_N^{\star
}}^{\overline
{M}_N}\frac{1-p(\frx_i)}{p^2(\frx_i)}h^2(\frx_i)\Biggr]
\\
&&\qquad=\frac{4(\overline{M}_N-M_N^{\star}+1)}{\varepsilon^2 N} \frac{\sum
_{i=M_N^{\star}}^{\overline{M}_N} \tilde
Q^i(x,({1-p})/{p^2}h^2)}{\overline{M}_N-M_N^{\star}+1} \\
&&\qquad\plim \frac{ 4\alpha\pi(({1-p})/{p}h^2
)}{\varepsilon
^2} ,
\end{eqnarray*}
which can be arbitrarily small as $\alpha$ goes to $0$. Similarly, one
can bound the third term of (\ref{eq:majoBN})
and let $N$ go to infinity. Finally, letting $\alpha$ go to 0, we
obtain that $B_N /\sqrt{N}\plim0$. Thus,
$\Delta_{N,2} \plim0$. Finally, by Slutsky's lemma,
\[
\Delta_{N,3}:=(N/M_N^\star)^{-1/2}\frac{\sum_{i=1}^{M_N^\star}
\hat\xi
_i^0 h(\frx_i)}{\sqrt{M_N^\star}} \dlim
\calN\bigl(0,V_0[h-\pi(h)]/ \pi(p)\bigr) .
\]
The proof is thus complete.
\end{pf}

Note that the above analysis also provides us with a universal control
variate for Metropolis--Hastings algorithms.
Indeed, while Lemma \ref{lem:2uston} shows that
\[
\hat\xi_i = 1+\sum_{j=1}^\infty\prod_{\ell\le j} \{ 1 -
\alpha
(\frx_i,y_\ell) \}
\]
is an unbiased estimator of $1/p(\frx_i)$, a simple independent
estimator of $p(\frx_i)$ is provided
by $\alpha(\frx_i,y_0)$ when $y_0$ is an independent draw from
$q(Y|\frx
_i)$. While the variation in
this estimate may result in a negligible improvement in the control
variate estimation, it is nonetheless
available for free in all settings and should thus be exploited.

%s4 ###
\section{Illustrations}\label{sec:ex}

We first consider a series of toy examples to assess the possible gains
brought about by the
essentially free Rao--Blackwellization. Our initial
example is a random walk Metro\-polis--Has\-tings algorithm with target
the $\mathcal{N}(0,1)$ distribution
and with proposal $q(y|x)=\varphi(x-y;\tau)$, a normal random walk with
scale $\tau$. The
acceptance probability is then the ratio of the targets, and Figure
\ref{fig:rbrw} illustrates
%
%f1 ###
%
\begin{figure}

\includegraphics{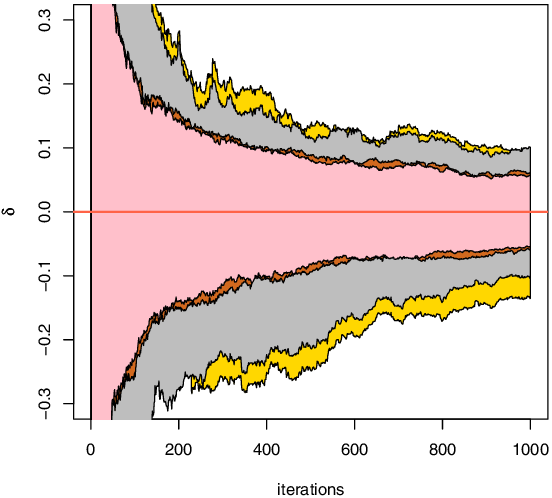}

\caption{Overlay of the variations of $250$ i.i.d. realizations of the
estimates $\delta$ \textup{(gold)}
and $\delta^\infty$ \textup{(grey)} of $\mathbb{E}[X]=0$ for $1000$
iterations, along with the $90\%$ interquantile
range for the estimates $\delta$ \textup{(brown)} and $\delta^\infty$
\textup{(pink)}, in the setting
of a random walk Gaussian proposal with scale $\tau=10$.}\label{fig:rbrw}
\end{figure}
the gain provided by the Rao--Blackwellization scheme by repeating the
simulation $250$
times and by representing
the $90\%$ range as well as the whole range of both estimators. The
gain provided by the Rao--Blackwellization is not huge
with respect to the overlap of both estimates, but one must consider
that the variability of the estimator
$\delta$ is due to two sources of randomness, one due to the $\frn_i$'s
and the other due
to the $\frx_i$'s. In addition, the gain forecasted by the above
developments is in terms of variance, not of tails, and
this gain is illustrated in Table \ref{tab:rbrw}. In this table, we
%
%t1 ###
%
\begin{table}[b]
\tablewidth=250pt
\caption{Ratios of the empirical variances of the components of the estimators
$\delta^\infty$ and $\delta$
of $\mathbb{E}[h(X)]$ for $100$ MCMC iterations over $10^3$ replications,
in the setting of a random walk Gaussian proposal with scale $\tau$,
when started with a normal simulation}\label{tab:rbrw}
\begin{tabular*}{\tablewidth}{@{\extracolsep{\fill}}lllll@{}}
\hline
\multicolumn{1}{@{}l}{$\bolds{h(x)}$} & \multicolumn{1}{c}{$\bolds{x}$}
& \multicolumn{1}{c}{$\bolds{x^2}$} & \multicolumn{1}{c}{$\bolds{\mathbb{I}_{X>0}}$}
& \multicolumn{1}{c@{}}{$\bolds{p(x)}$}\\
\hline
$\tau=0.1$ &0.971 &0.953 &0.957 &0.207\\
$\tau=2$ &0.965 &0.942 &0.875 &0.861\\
$\tau=5$ &0.913 &0.982 &0.785 &0.826\\
$\tau=7$ &0.899 &0.982 &0.768 &0.820\\
\hline
\end{tabular*}
\end{table}
provide the ratio of the empirical variances of
the terms $\frn_i h(\frx_i)$ and $\hat\xi_i h(\frx_i)$ for several
functions $h$.
The minimal gains when $\tau=0.1$ are explained by the fact that the
acceptance probability is almost 1 with such a small scale, while the
higher rejection rate of $82\%$ when $\tau=7$ leads
to more improvement in the variances because of a higher variability in
the original $\frn_i$'s. Note that the last column
of Table \ref{tab:rbrw} estimates $\mathbb{E}[p(x)]$ via an additional
draw from $q(Y|\frx_i)$, as pointed out at the end of the
previous section. Table \ref{tab:rbt} gives an evaluation of the
%
%t2
\begin{table}
\tablewidth=280pt
\caption{Evaluations of the additional computing effort due to the use
of the Rao--Blackwell correction: median and mean numbers of additional
iterations, $80\%$ and $90\%$ quantiles for the additional iterations,
and ratio of the average R computing times obtained over $10^5$
simulations in the same setting as Table \protect\ref{tab:rbrw}}\label{tab:rbt}
\begin{tabular*}{\tablewidth}{@{\extracolsep{\fill}}lccccd{1.2}@{}}
\hline
& \textbf{Median} & \textbf{Mean} & $\bolds{q_{0.8}}$ & $\bolds{q_{0.9}}$
& \multicolumn{1}{c@{}}{\textbf{Time}}\\
\hline
$\tau=0.1$ &1.0 &6.49  &5.0 &11 &2.33\\
$\tau=2$ &0.0 &7.06 &4.3 &11 &6.5\\
$\tau=5$ &0.0 &9.02 &4.6 &13 &8.4\\
$\tau=7$ &0.0 &9.47 &4.8 &13 &3.5\\
\hline
\end{tabular*}
\end{table}
additional time required by the Rao--Blackwellization,
even though this should not be overinterpreted. As shown by both the
difference between the median and the mean additional times
and the variability of the increase in the R computing time, despite
the use of $10^5$ replications,
the occurrence of a few very lengthy runs accounts for the apparently
much higher computing times. Note that this difficulty with
very long runs can be completely bypassed when using a truncated
version $\delta^k$ instead of the unconstrained version $\delta
^\infty$.

%f2 ###
%
\begin{figure}

\includegraphics{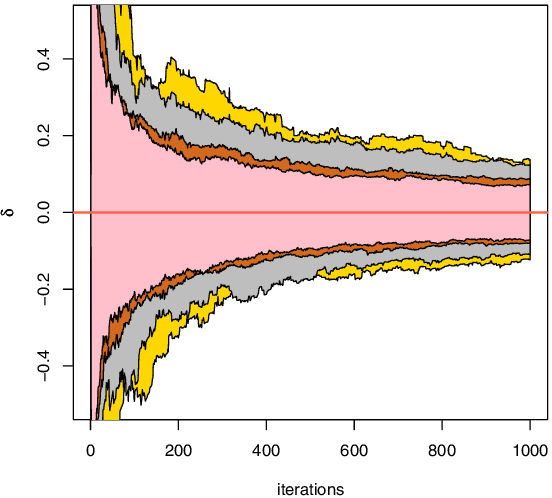}

\caption{Overlay of the variations of $250$ i.i.d. realizations of the
estimates $\delta$ \textup{(gold)}
and $\delta^\infty$ \textup{(grey)} of $\mathbb{E}[X]=0$ for $1000$
iterations, along with the $90\%$ interquantile
range for the estimates $\delta$ \textup{(brown)} and $\delta^\infty$
\textup{(pink)}, in the setting
of an independent Cauchy proposal with scale $0.25$.}\label{fig:rbind}
\end{figure}

Our second example is an independent Metropolis--Hastings algorithm
with target the
$\mathcal{N}(0,1)$ distribution
and with proposal a Cauchy $\mathcal{C}(0,0.25)$ distribution. The
outcome is quite similar, but producing
a slightly superior improvement, as shown in Figure \ref{fig:rbind}.
Table \ref{tab:rbind} also indicates much
more clearly that the gains in variance can be substantial. Once again,
Table \ref{tab:rbrwt} shows that the
computing time may vary quite widely due to a few outlying instances of
late acceptance.

Our third example is an independent Metropolis--Hastings algorithm with
target the
$\mathcal{E}\mathrm{xp}(\lambda)$ distribution
and with proposal the $\mathcal{E}\mathrm{xp}(\mu)$ distribution. In this case,
the probability functions $p(x)$
in (\ref{eq:defp}) and $r(x)$ in Proposition \ref{prop:var} can be
derived in closed form as
\[
p(x)=1-\frac{\lambda-\mu}{\lambda} e^{-\mu x}
\quad\mbox{and}\quad
r(x)=1-\frac{2(\lambda-\mu)}{2\lambda-\mu} e^{-\mu x} .
\]
This special case means that we can compare the variability of the
original Metropolis--Hastings estimator with its
Rao--Blackwellized version $\delta^\infty_M$, but also with the optimal
importance sampling version
shown in (\ref{eq:lgnInduite}). As illustrated by
Table~\ref{tab:expexp}, the gain brought about by the Rao--Blackwellization is
significant, even when compared with the reduction in variance of the
optimal importance sampling version.
Obviously, the most extreme case of $\mu=0.1$ shows that the ideal
importance sampling estimator
(\ref{eq:lgnInduite}) could bring considerable improvement, were it available.

%t3 ###
%
\begin{table}[b]
\tablewidth=250pt
\caption{Ratios of the empirical variances of the components of the estimators
$\delta^\infty$
and $\delta$ of $\mathbb{E}[h(X)]$ for $100$ MCMC iterations over
$10^3$ replications,
in the setting of an independent Cauchy proposal with scale $\tau$
started with a normal simulation}\label{tab:rbind}
\begin{tabular*}{\tablewidth}{@{\extracolsep{\fill}}lllll@{}}
\hline
\multicolumn{1}{@{}l}{$\bolds{h(x)}$} & \multicolumn{1}{c}{$\bolds{x}$}
& \multicolumn{1}{c}{$\bolds{x^2}$} & \multicolumn{1}{c}{$\bolds{\mathbb{I}_{X>0}}$}
& \multicolumn{1}{c@{}}{$\bolds{p(x)}$}\\
\hline
$\tau=0.25$ &0.677 &0.630 &0.663 &0.599\\
$\tau=0.5$ &0.790 &0.773 &0.716 &0.603\\
$\tau=1$ &0.937 &0.945 &0.889 &0.835\\
$\tau=2$ &0.781 &0.771 &0.694 &0.591\\
\hline
\end{tabular*}
\end{table}

%%\caption{\label{fig:rbexp}
%Overlay of the variations of $500$ iid realisations of the estimates $
%$\delta^\infty$ {\em(medium grey)} and of the importance sampling
%version {\em(light grey)}
%of $\mathbb{E}[X]=10$ when $X\sim\mathcal{E}xp(.1)$ for $100$
%iterations, along with the $90\%$ interquantile
%ranges {\em(same colour code)}, in the setting of an independent
%exponential proposal with scale $\mu=0.02$.}

%t4
\begin{table}
\tablewidth=280pt
\caption{Evaluations of the additional computing effort due to the use
of the Rao--Blackwell correction: median and mean numbers of additional
iterations, $80\%$ and $90\%$ quantiles for the additional iterations,
and\break ratio of the average R computing times obtained over $10^5$\break
simulations in the same setting as Table
\protect\ref{tab:rbind}}\label{tab:rbrwt}
\begin{tabular*}{\tablewidth}{@{\extracolsep{\fill}}ld{1.2}cd{1.1}d{2.1}d{1.2}@{}}
\hline
& \multicolumn{1}{c}{\textbf{Median}} & \multicolumn{1}{c}{\textbf{Mean}}
& \multicolumn{1}{c}{$\bolds{q_{0.8}}$} & \multicolumn{1}{c}{$\bolds{q_{0.9}}$}
& \multicolumn{1}{c@{}}{\textbf{Time}}\\
\hline
$\tau=0.25$ &0.0  &8.85 &4.9 &13  &4.2\\
$\tau=0.50$ &0.0  &6.76 &4   &11  &2.25\\
$\tau=1.0$ &0.25 &6.15 &4   &10  &2.5\\
$\tau=2.0$ &0.20 &5.90 &3.5 &8.5 &4.5\\
\hline
\end{tabular*}
\end{table}

%t5 ###
%
\begin{table}%[b]
\tablewidth=280pt
\caption{Ratios of the empirical variances of the components of the estimators
$\delta$
and $\delta^\infty$ of $\mathbb{E}[h(X)]$ for $100$ MCMC iterations
over $10^3$ replications,
in the setting of an independent exponential proposal with scale $\mu$
started with an
exponential $\mathcal{E}\mathrm{xp}(1)$ simulation from the target distribution;
the second row is the optimal
gain obtained by using $1/p(\frx_i)$ as importance weight, that is, the
importance sampling estimator
(\protect\ref{eq:lgnInduite})}\label{tab:expexp}
\begin{tabular*}{\tablewidth}{@{\extracolsep{\fill}}ld{1.4}d{1.4}cc@{}}
\hline
$\bolds{h(x)}$ & \multicolumn{1}{c}{$\bolds{x}$} &
\multicolumn{1}{c}{$\bolds{x^2}$} & $\bolds{\mathbb{I}_{X>1}}$ & $\bolds{p(x)}$\\
\hline
$\mu=0.9$ &0.933 &0.953 &0.939 &0.238\\
&0.787 &0.774 &0.859 &0.106\\
$\mu=0.5$ &0.722 &0.807 &0.759 &0.591\\
&0.291 &0.394 &0.418 &0.285\\
$\mu=0.3$ &0.671 &0.738 &0.705 &0.657\\
&0.131 &0.175 &0.263 &0.295 \\
$\mu=0.1$ &0.641 &0.700 &0.676 &0.703\\
&0.0561 &0.0837 &0.159 &0.289\\
\hline
\end{tabular*}
\end{table}

Our fourth and final toy example is a geometric $\mathcal{G}\mathrm{eo}(\beta)$
target associated with a one-step
random walk proposal:
\[
\pi(x)=\beta(1-\beta)^x\quad\mbox{and}\quad
2q(y|x)=\cases{\mathbb{I}_{|x-y|=1}, &\quad if $x>0$,\cr
\mathbb{I}_{|y|\le1}, &\quad if $x=0$.}
\]
For this problem,
\[
p(x) = 1-\beta/2\quad\mbox{and}\quad r(x)=1-\beta+\beta^2/2 .
\]
We can therefore compute the gain in variance
\[
\frac{p(x)-r(x)}{2p(x)-r(x)} \frac{2-p(x)}{p^2(x)} =2 \frac
{\beta
(1-\beta)(2+\beta)}{(2-\beta^2)(2-\beta)^2},
\]
which is optimal for $\beta=0.174$, leading to a gain of $0.578$, while
the relative gain in variance is
\[
\frac{p(x)-r(x)}{2p(x)-r(x)} \frac{2-p(x)}{1-p(x)} =\frac{(1-\beta
)(2+\beta)}{(2-\beta^2)},
\]
which is decreasing in $\beta$.

We now apply the Rao--Blackwellization to a probit modeling of the Pima
Indian diabetes study [\citet{venablesripley2002}].
The data set we consider covers a population of $332$ women who were at
least 21 years old,
of Pima Indian heritage and living near Phoenix, Arizona. These women
were tested for diabetes
according to World Health Organization (WHO) criteria. The data were
collected by the US National
Institute of Diabetes and Digestive and Kidney Diseases, and is
available with the basic \textsf{R} package.
The goal is to explain the diabetes variable in terms of the body mass index.
We use a standard representation of the diabetes binary variables $y_i$
as indicators $y_i = \mathbb{I}_{z_i>0}$ of latent variables $z_i$,
$z_i|\beta\sim\mathcal{N}(\bx_i^{\mathrm{T}}\beta,1)$,
associated with a standard regression model, that is, where the $\bx
_i$'s are $p$-dimensional covariates and
$\beta$ is the vector of regression coefficients. Given $\beta$, the
$y_i$'s are independent Bernoulli random variables
with $\mathbb{P}(y_i=1|\beta)=\Phi(\bx_i^{\mathrm{T}}\beta
)$, where $\Phi$ is the standard normal cumulative distribution function.
The choice of a prior distribution for the probit parameter $\beta$ is
open to debate [\citet{marinrobert2007}], but, for
the purposes of illustration, we opt for a flat prior. The Metropolis--Hastings
algorithm associated with the posterior is a simple two-dimensional
random walk proposal with a single scale $\tau$, due to the
normalization of the body mass index.
Simulations based on different scales $\tau$ show significant
improvements in the variance of the terms of
$\delta$ and $\delta_\infty$ by a factor of $2$. If we consider,
in addition, the possible improvement brought about by the control
variate indicated at the end of the previous section, the regression
coefficient can be obtained by a simple regression of $\hat\xi_i
h(\frx
_i)$ over $\hat\xi_i \alpha(\frx_i,y_0)$ and Table \ref{tab:pima}
%
%t6 ###
%
\begin{table}%[b]
\tablewidth=250pt
\caption{Ratios of the empirical variances of the components of the\break estimators
$\delta$
and $\delta^\infty$ of $\mathbb{E}[h(\beta)]$ for $10^4$ MCMC iterations,
in the setting of a random walk proposal with scale $\tau$ started from
the MLE estimate of $\beta$
applied to the Pima Indian diabetes study; the second row for each
value of $\tau$ is the additional
improvement in the empirical variances resulting from using the control
variate}\label{tab:pima}
\begin{tabular*}{\tablewidth}{@{\extracolsep{\fill}}lccc@{}}
\hline
$\bolds{h(\beta)}$ & $\bolds{\beta_1}$ & $\bolds{\beta_2}$ & $\bolds{\mathbb{I}_{\beta_2>0.5}}$\\
\hline
$\tau=0.01$ &0.523 &0.516 &0.944 \\
&0.999 &0.999 &0.996 \\
$\tau=0.05$ &0.481 &0.518 &0.877 \\
&0.864 &0.888 &0.929 \\
$\tau=0.1$ &0.550 &0.555 &0.896 \\
&0.749 &0.748 &0.765 \\
$\tau=0.2$ &0.562 &0.568 &0.845 \\
&0.532 &0.527 &0.620 \\
$\tau=0.5$ &0.556 &0.565 &0.778 \\
&0.412 &0.433 &0.479 \\
\hline
\end{tabular*}
\end{table}
shows that this additional step brings about a significant improvement
over the Rao--Blackwellized version.

\section*{Acknowledgments}

Both authors are grateful to Elke Thonnes and Gareth Roberts for
organizing the MCMC workshop
in Warwick, March 2009, that led to the completion of this work.
Suggestions from the editorial team led to an improved presentation,
for which the authors are
most grateful.

%suskaldyti doi

%
\printaddresses

\end{document}